\newtheorem{theorem}{Theorem}
\newtheorem{lemma}[theorem]{Lemma}
\newcommand{\RR}{\ensuremath{\mathbb R}}  
\newcommand{\NN}{\ensuremath{\mathbb N}}  
\newcommand{\LL}{\ensuremath{\mathbb L}}  %
\newcommand{\D}{\ensuremath{\mathcal{D}}}  
\newcommand{\GG}{\ensuremath{G_{\le 1}}}
\def\dist{\mathit{dist}}
\def\best{\mathit{best}}
\newcommand{\cycle}{\mathit{cycle}}
\newcommand{\walk}{\mathit{walk}}
\newcommand\CR{\mbox{\tt cr}_2}		  
\def\DEF#1{\textbf{\emph{#1}}}
\let\leq\leqslant
\let\le\leqslant
\let\ge\geqslant
\newcommand{\out}[1]{}
\title{Two Optimization Problems for Unit Disks\thanks{Supported by the Slovenian Research Agency, core program P1-0297 and project L7-5459.}}
\author{Sergio Cabello\footnote{FMF, University of Ljubljana, and 
	Institute of Mathematics, Physics and Mechanics, Slovenia.}
\and
	Lazar Milinkovi\'c\footnote{FMF and FRI, University of Ljubljana, Slovenia.}}
\begin{document}
\maketitle

\begin{abstract}
We present an implementation of a recent algorithm to compute shortest-path trees
in unit disk graphs in $O(n\log n)$ worst-case time, 
where $n$ is the number of disks.

In the minimum-separation problem, we are given $n$ unit disks and two points $s$ and $t$,
not contained in any of the disks, and we want to compute 
the minimum number of disks one needs to retain so that any curve connecting
$s$ to $t$ intersects some of the retained disks.
We present a new algorithm solving this problem in $O(n^2\log^3 n)$ worst-case time
and its implementation.
\end{abstract}

\section{Introduction}

In this paper we consider two geometric optimization problems in the plane
where unit disks play a prominent role. For both problems 
we discuss efficient algorithms to solve them, provide an implementation
of these algorithms, and present experimental results on the implementation.

The first problem we consider is computing a \emph{shortest-path tree} 
in the (unweighted) intersection graph of unit disks. 
The input to the problem is a set $\D$ of $n$ disks of the same size, 
each disk represented by its center.
The corresponding unit disk (intersection) graph has a vertex for each disk,
and an edge connecting two disks $D$ and $D'$ of $\D$ whenever $D$ and $D'$ intersect.
An alternative, more convenient point of view, is to take as vertex set the set of
centers of the disks, denoted by $P$, and connecting two points $p$ and $q$ of $P$ 
whenever the Euclidean length $|pq|$ is at most the diameter of a disk. 
The graph is unweighted.
Given a root $r\in P$, the task is to compute a shortest-path tree from $r$ in this graph.
See Figure~\ref{fig:example1}.

\begin{figure}[htb]
	\centering
	\includegraphics[page=4,width=.32\textwidth]{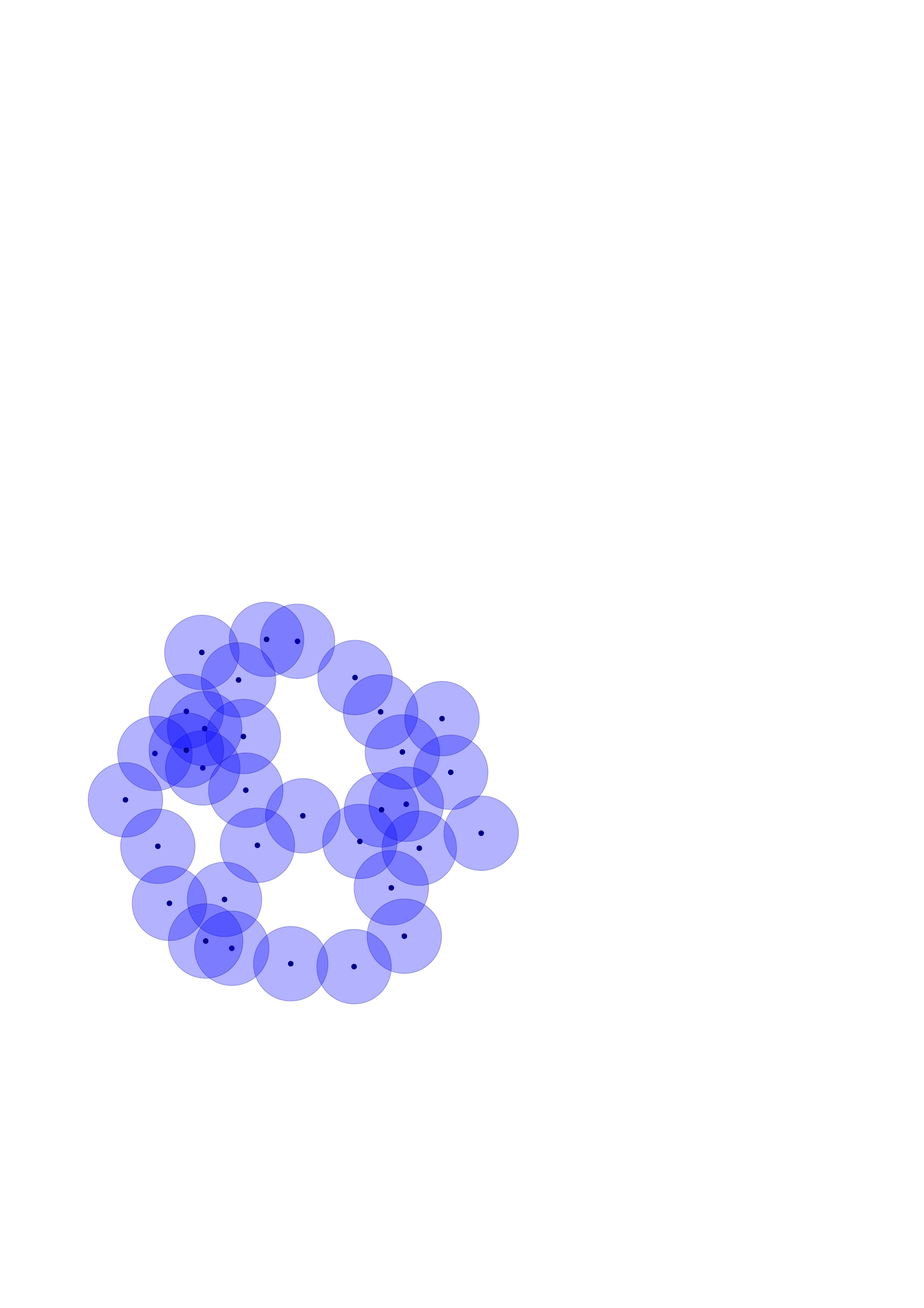}
	\hfill
	\includegraphics[page=2,width=.32\textwidth]{unitdisks}
	\hfill
	\includegraphics[page=3,width=.32\textwidth]{unitdisks}
	\caption{Left: unit disks and two additional points $s$ and $t$. 
		Middle: intersection graph of the disks. 
		Right: a shortest-path tree in the graph.}
	\label{fig:example1}
\end{figure}

The second problem we consider is the \emph{minimum-separation problem}.
The input is a set $\D$ of $n$ unit disks in the plane and two points
$s$ and $t$ not covered by any disks of $\D$. 
We say that $\D$ \emph{separates} $s$ and $t$ if each curve in the plane 
from $s$ to $t$ intersects some disk of $\D$.
The task is to find the minimum cardinality subset of $\D$
that separates $s$ and $t$. See the left of Figure~\ref{fig:example1} 
for an example of an instance.
Formally, we want to solve
\begin{align*}
	\min ~~		& |\D'|\\
	 \mbox{s.t.}~~ & \D'\subseteq \D \text{ and $\D'$ separates $s$ and $t$}. 
\end{align*}

Unit disks are the most standard model used for wireless sensor networks; 
see for example~\cite{GG11,HS95,zg-wsn-04}. 
Often the model is referred as UDG.
This model provides an appropriate trade off between simplicity and accuracy. 
Other models are more accurate, as for example discussed in~\cite{KWZ03,LP10},
but obtaining efficient algorithms for them is much more difficult.

While unit disks give a simple model, exploiting the geometric features 
of the model is often challenging. 
Shortest paths in unit disk graphs are essential for routing and
are a basic subroutine for several other more complex tasks. 
A somehow unexpected application of shortest paths in unit-disk graphs
to boundary recognition is given in~\cite{WGM06}.
The minimum-separation problem and variants thereof have been considered 
in~\cite{CG16,gkpvv-16}. 
The problem is dual to the barrier-resilience problem considered in~\cite{BK09,KH07,KLA07}.
It is not obvious that the minimum-separation problem can be solved optimally
in polynomial time, and the known algorithm for this uses as a subroutine 
shortest paths in unit disk graphs. 
Thus, both problems considered in this paper are related and it is worth to 
consider them together.

\paragraph{Our contribution}
We are aware of three algorithms to compute shortest-path trees in unit disk graphs in
$O(n\log n)$ worst-case time: one by Cabello and Jej\v{c}i\v{c}~\cite{CJ15}, 
one by Chan and Skrepetos~\cite{ChanS16}, and one Efrat, Itai and Katz~\cite{eik-01}. 
Here we report on an implementation of a modification of the algorithm in~\cite{CJ15},
and compare it against two obvious alternatives.
The only complex ingredients in the algorithm is computing the Delaunay triangulation
and static nearest-neighbour queries, but efficient libraries are available for this.
The algorithm of~\cite{eik-01} would be substantially harder to implement 
and it has worse constants hidden in the $O$-notation. The algorithm of~\cite{ChanS16}
for single source shortest paths is implementable and we expect that it would work
good in practice. However, this last algorithm has been published only
very recently, when we had completed our research.

As mentioned before, it is not obvious that the minimum-separation problem 
can be solved in polynomial time. 
In particular, the conference version~\cite{gkv-ipud-11} of~\cite{gkpvv-16} gave
2-approximation algorithm for the problem. 
Cabello and Giannopoulos~\cite{CG16} provide an exact algorithm that takes $O(n^3)$ 
worst-case time and works for arbitrary shapes, not just disks. 
In this paper we improve this last algorithm to near-quadratic time for the 
special case of unit disks. 
The basic principle of the algorithm is the same, but several additional tools
from Computational Geometry exploiting that we have unit disks
have to be employed to reduce the worst-case running time. 
Furthermore, we implement a variant of the new, near-quadratic-time algorithm and report
on the experiments.

\paragraph{Assumptions} 
We will assume that \emph{unit disk} means that it has radius $1/2$. 
Up to scaling the input data, this choice is irrelevant.
However, it is convenient for the exposition
because then the disks intersect whenever 
the distance between their centers is $1$. 
The implementation and the experiments also make this assumption.

Henceforth $P$ will be the set of centers of $\D$. 
All the computation will be concentrated on $P$. 
In particular, we assume that $P$ is known.
(For the shortest path problem, one could possibly consider weaker models based 
on adjacencies.)

We will work with the graph $\GG(P)$ with vertex set $P$ 
and an edge between two points $p,q\in P$ 
whenever their Euclidean distance $|pq|$ is at most $1$. 
In the notation we remove the dependency on $P$ and on the distance.
Thus we just use $G$ instead of $\GG(P)$.
For simplicity of the theoretical 
exposition we will sometimes assume that $G$ is connected.
It is trivial to adapt to the general case, for example
treating each connected component separately.
The implementation does not make this assumption.

\paragraph{Organization of the paper} 
In Section~\ref{sec:algorithms} we discuss the theoretical
algorithms for both problems and their guarantees.
In Section~\ref{sec:implementation} we discuss the implementations 
and the experimental results.

\section{Description of algorithms}
\label{sec:algorithms}

\subsection{Shortest-path tree in unit-disk graphs}
\label{sec:algorithm-sptree}
We describe here the algorithm of Cabello and Jej\v{c}i\v{c}~\cite{CJ15} 
to compute a shortest path tree in $G$ from a given root point $r\in P$. 
As it is usually done for shortest path algorithms, 
we use tables $\dist[\cdot]$ and $\pi[\cdot]$ indexed by the points of $P$ to record, 
for each point $p\in P$, the distance $d_{G}(s,p)$ and the ancestor of $p$ 
in a shortest $(s,p)$-path. 

The pseudocode of the algorithm, which we call \proc{UnweightedShortestPath},
 is in Figure~\ref{fig:BFS}. 
We explain the intuition, taken almost verbatim from~\cite{CJ15}.
We start by computing the Delaunay triangulation $DT(P)$ of $P$. 
We then proceed in rounds for increasing values of $i$, 
where at round $i$ we find the set $W_i$ of points at distance exactly $i$ in $G$ from the source $r$. We start with $W_0=\{ r\}$. 
At round $i$, we use $DT(P)$ to grow a neighbourhood around the points of $W_{i-1}$ 
that contains $W_{i}$. 
More precisely, we consider the points adjacent to $W_{i-1}$ in $DT(P)$ as candidate points for $W_{i}$. For each candidate point that is found to lie in $W_{i}$, we also take its adjacent vertices in $DT(P)$ as new candidates to be included in $W_{i}$. For checking whether a candidate point $p$ lies in $W_{i}$ we use a data structure to find a nearest neighbour of $p$ in $W_{i-1}$. If the distance from $p$ to its nearest neighbour $w$ in $W_{i-1}$ is
smaller than $1$, then the shortest path tree is extended by connecting $p$ to $w$.

\begin{figure}[htb]
\begin{center}
\ovalbox{~~~~
\begin{varwidth}{\linewidth}
\begin{codebox}
    \Procname{$\proc{UnweightedShortestPath}(P,r)$}
    \li build the Delaunay triangulation $DT(P)$
    \li \For$p\in P$ \Do
    	\li $\dist[p] \gets \infty$
    	\li $\pi[p] \gets \const{nil}$\End
    \li $\dist[r] \gets  0$
    \li $W_0 \gets \{ r\}$
    \li $i\gets1$
    \li \While $W_{i-1}\neq\emptyset$ \Do
    	\li build data structure for nearest neighbour queries in $W_{i-1}$
    	\li $Q \gets W_{i-1}$ \>\>\>\Comment candidate points
        \li $W_{i}\gets\emptyset$
    	\li \While $Q\neq\emptyset$\Do
    		\li $q$ an arbitrary point of $Q$
            \li remove $q$ from $Q$
            \li \For $qp$ edge in $DT(P)$ \Do
            	\li \If $\dist[p]=\infty$ \Then
					\li $w \gets$ nearest neighbour of $p$ in $W_{i-1}$
					\li \If $|pw|\leq 1$ \Then
						\li $\dist[p]\gets i$
						\li $\pi[p]\gets w$
						\li add $p$ to $Q$
						\li add $p$ to $W_{i}$
						\End
                    \End
            	\End
            \End
        \li $i\gets i+1$
    \End
    \li \Return $\dist[\cdot]$ and $\pi[\cdot]$
    \\[-2mm]
\end{codebox}
\end{varwidth}
~~~~}\end{center}
\caption{Algorithm from~\cite{CJ15} to compute a shortest path tree in the unweighted case.}
\label{fig:BFS}
\end{figure}

Cabello and Jej\v{c}i\v{c}~\cite{CJ15} show that the algorithm correctly computes
the shortest-path tree from $r$. 
If for nearest neighbors we use a data structure that,
for $n$ points, has construction time $T_c(n)$ and query time $T_q(n)$, 
and the Delaunay triangulation is computed in $T_{DT}(n)$ time,
then the algorithm takes $O(T_{DT}(n)+ T_c(n)+ n T_q(n))$ time. 
Standards tools in Computational Geometry imply that 
$T_{DT}(n)=O(n\log n)$, $T_c(n)=O(n\log n)$ and $T_q(n)=O(\log n)$.
This leads to the following.

\begin{theorem}[Cabello and Jej\v{c}i\v{c}~\cite{CJ15}]
  Let $P$ be a set of $n$ points in the plane and let $r$ be a point from $P$. 
  In time $O(n \log n)$ we can compute a shortest path tree from $r$
  in the unweighted graph $\GG(P)$.
\end{theorem}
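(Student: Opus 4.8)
The plan is to verify that \proc{UnweightedShortestPath} is correct and that, with the stated data structures, it runs in $O(n\log n)$ time. For correctness I would carry an induction on $i$ with the invariant that, at the start of the $i$-th pass of the outer loop, $W_{i-1}=\{p\in P: d_G(r,p)=i-1\}$, that $\dist[p]=d_G(r,p)$ for every already-discovered $p$ with $\pi[p]$ a vertex adjacent to $p$ in $G$ at distance $\dist[p]-1$ (so the edges $\{p,\pi[p]\}$ form a shortest-path tree on the discovered vertices), and that $\dist[p]=\infty$ for all remaining $p$. The base case follows from $W_0=\{r\}$ and the initialization. In the inductive step the inclusion ``$W_i\subseteq\{p:d_G(r,p)=i\}$'' is immediate: a point $p$ enters $W_i$ only while $\dist[p]=\infty$ (hence $d_G(r,p)\ge i$ by the invariant) and only via some $w\in W_{i-1}$ with $|pw|\le1$ (hence $d_G(r,p)\le i$), and then $\pi[p]=w$ is a legitimate parent.

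The reverse inclusion is the heart of the matter and the only place the geometry of unit disks is used. Fix $p$ with $d_G(r,p)=i$ and choose $w\in W_{i-1}$ with $|pw|\le1$. I would slide a point $x$ along the segment $\overline{pw}$ from $p$ to $w$ and record the sequence of Voronoi cells of $P$ that it meets; consecutive cells are separated by a Voronoi edge dual to an edge of $DT(P)$, so this yields a walk $p=u_0,u_1,\dots,u_k=w$ in $DT(P)$ (perturbing the segment if it meets a Voronoi vertex, a standard technicality). The key point is that each $u_\ell$ is the site nearest to some $x\in\overline{pw}$, so $|xu_\ell|\le\min(|xp|,|xw|)$; since $|px|+|xw|=|pw|\le1$ this gives $|pu_\ell|\le|px|+|xu_\ell|\le|pw|\le1$ and, symmetrically, $|wu_\ell|\le|pw|\le1$. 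Hence every $u_\ell$ is adjacent in $G$ to both $p$ and $w$ (or equals one of them), so $d_G(r,u_\ell)\in\{i-1,i\}$. Now reading the walk from $w$: suppose for contradiction some $u_\ell$ is never placed in the queue $Q$ during round $i$ and take the such $\ell^*$ closest to $w$ along the walk. Then $u_{\ell^*}\neq w$, its walk-neighbour towards $w$ is placed in $Q$ and hence processed, and when the $DT(P)$-edge to $u_{\ell^*}$ is scanned we must have $\dist[u_{\ell^*}]=\infty$ at that moment — otherwise $u_{\ell^*}$ was discovered earlier, which by the invariant together with $d_G(r,u_{\ell^*})\in\{i-1,i\}$ would put it in $W_{i-1}$ (already in $Q$) or in $W_i$ (placed in $Q$ when added), a contradiction. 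But then the nearest-neighbour test succeeds because $w\in W_{i-1}$ lies within distance $1$ of $u_{\ell^*}$, so $u_{\ell^*}$ is added to $W_i$ and $Q$ — contradiction. Hence every $u_\ell$, and in particular $p=u_0$, is placed in $W_i$. Termination is clear since the nonempty sets $W_0,W_1,\dots$ are disjoint.

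For the running time: the Delaunay triangulation is built once, in $T_{DT}(n)=O(n\log n)$. The sets $W_0,W_1,\dots$ are pairwise disjoint, so $\sum_i|W_{i-1}|\le n$, and hence the nearest-neighbour structures built over all rounds cost $\sum_i T_c(|W_{i-1}|)\le\sum_i O\bigl(|W_{i-1}|\log|W_{i-1}|\bigr)=O(n\log n)$. Each point is removed from $Q$ exactly once over the whole run, so the total number of scanned $DT(P)$-edges is $\sum_{q\in P}\deg_{DT(P)}(q)=O(n)$ by planarity of $DT(P)$; each scan triggers at most one nearest-neighbour query, giving $O(n)$ queries at $T_q(n)=O(\log n)$ each, and all remaining bookkeeping is $O(n)$. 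Summing gives $O(n\log n)$.

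The main obstacle is the reverse inclusion above: one must argue that the purely local, $DT(P)$-guided flooding of round $i$ really reaches every new BFS vertex. The crucial structural input is that $p$ and $w$ are joined in $DT(P)$ by a walk contained in the lens $B(p,1)\cap B(w,1)$, which confines the walk to BFS levels $i-1$ and $i$; this is exactly what prevents the flooding from stalling at a vertex of level $i-2$ and from ever certifying a vertex of level $i+1$. Using the segment $\overline{pw}$ together with the nearest-site inequality is what makes this confinement transparent — a cruder region, such as the unit disk around $w$ alone, would not do.
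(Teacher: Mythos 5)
Your proposal is correct and takes essentially the same route as the paper: the paper presents the same algorithm and the same per-level time accounting (Delaunay triangulation once, nearest-neighbour structures rebuilt per level, disjointness of the $W_i$), and it delegates correctness to the cited work~\cite{CJ15}, whose key lemma --- the walk through the Voronoi cells crossed by the segment $\overline{pw}$, confining a $DT(P)$-path to points within distance $|pw|\le 1$ of both endpoints --- is exactly what you reconstruct. Two harmless slips worth fixing: each point of $W_{i-1}$ is re-inserted into $Q$ at the start of round $i$, so a point is dequeued at most twice rather than exactly once (still $O(n)$ edge scans), and the walk vertices that lie in $W_{i-1}$ are placed in $Q$ but not in $W_i$, so the final conclusion should be stated only for $p$, which works since $p\notin W_{i-1}$.
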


It is clear that, when computing the shortest path tree from several sources,
we only need to compute the Delaunay triangulation once.

\subsection{Minimum separation with unit-disk}
\label{sec:algorithm-separation}

Cabello and Giannopoulos~\cite{CG16} present an algorithm
for the minimum separation problem that in the worst-case runs in cubic-time.
The algorithm has one feature that is both an advantage and a disadvantage: 
it works for any reasonable shapes, like segments or ellipses, and not just unit disks.
This means that it is very generic, which is good,
but it cannot exploit any properties of  unit disks.

In this section we are going to describe an algorithm to solve the minimum separation
problem \emph{for unit disks} in roughly quadratic time.
The improvement is based on 3 ingredients. 
The first ingredient is a reinterpretation of the algorithm of~\cite{CG16} 
for disks. In the original algorithm, we had to select a point inside each
shape. For disks there is a natural, obvious choice, the center of the disk.
This allows for a simpler description and interpretation of the algorithm.
We provide the description in Section~\ref{sec:generic}

The second ingredient is the efficient algorithm for shortest-path trees for the graph $G$.
The third ingredient is a compact treatment of the edges of $G$ using
a few tools from Computational Geometry, namely 
range trees, point-line duality, and nearest-neighbour searches.
This is explained in Section~\ref{sec:quadratic}.

\subsubsection{Generic algorithm specialized for unit disks}
\label{sec:generic}
Let us first introduce some notation.
Recall that $s$ and $t$ are the two points to separate.
Each walk $W$ in the graph $G=\GG(P)$ defines a planar polygonal curve
in the obvious way: we connect the points of $P$ 
with segments in the order given by $W$. 
We will relax the notation slightly and denote also by $W$ the curve itself.
For any spanning tree $T$ of $G$ and any edge $e\in E(G)\setminus E(T)$, 
let $\cycle(T,e)$ be the unique cycle in $T+e$.
Finally, for any walk in $G(P)$, let $\CR (st,W)$ be the 
modulo $2$ value of the number of crossings between the segment $st$ 
and (the curve defined by) $W$.
The following property is implicit in~\cite{CG16} and explicit in~\cite{CK15}:
\begin{quote}
	Let $T$ be any spanning tree of $G$.
	The set of unit disks with centers in $P$ separate $s$ and $t$ if and only
	if there exists some edge $e\in E(G)\setminus E(T)$ 
	such that $\CR (st, \cycle(T,e))=1$.
\end{quote}

A consequence of this is that finding a minimum separation amounts 
to finding a shortest cycle in $G$ that crosses the segment $st$ an odd number of times.
Moreover, one can show that we can restrict our search to 
a very concrete family cycles, as follows. 
Consider any optimal cycle $W^*$ and let $r^*$ be any vertex in $W^*$.
Fix a shortest-path tree $T_{r^*}$ from $r^*$ in $G$.
When there are many, the choice of $T_{r^*}$ is irrelevant. 
Then, the set of cycles 
\[
	\{ \cycle(T_{r^*},e)\mid e\in E(G)\setminus E(T_{r^*})\}
\]
contains an optimal solution.
This follows from the co-called 3-path condition.
We include here the key property that implies this claim and
spell out a self-contained proof. See~\cite{CG16} for very similar ideas.

\begin{lemma}
	Let $W^*$ be a shortest cycle in $G$ that crosses the segment 
	$st$ an odd number of times and let $r^*$ be any vertex in $W^*$.
	Fix a shortest-path tree $T_{r^*}$ from $r^*$ in $G$.
	Then, the set of cycles $\{ \cycle(T_{r^*},e)\mid e\in E(G)\setminus E(T_{r^*})\}$
	contains a shortest cycle of $G$ that crosses $st$ an odd number of times.
\end{lemma}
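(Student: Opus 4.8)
The plan is to use the classical 3-path condition for the parity class defined by $\CR(st,\cdot)$, and then to argue via a standard exchange/averaging argument that among all shortest cycles crossing $st$ an odd number of times, there is one of the form $\cycle(T_{r^*},e)$. First I would isolate the key structural fact: say a walk $W$ in $G$ is \emph{odd} if $\CR(st,W)=1$, and observe that oddness is additive modulo $2$ under concatenation of walks (the number of crossings of $st$ with a concatenation is the sum of the numbers of crossings, so parity adds). From this one gets the 3-path condition: if $u,v$ are two vertices and $Q_1,Q_2,Q_3$ are three $(u,v)$-walks such that each of the three cycles $Q_i \cup \overline{Q_j}$ is a closed walk, then the parities $\CR(st,Q_1), \CR(st,Q_2), \CR(st,Q_3)$ cannot be pairwise equal unless all three pairwise concatenations are even; equivalently, at least two of the three concatenations $Q_i\cup\overline{Q_j}$ are even. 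I would state this as a short self-contained sublemma since it is the engine of the argument.

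Next I would set up the main argument. Let $W^*$ be a shortest odd cycle and $r^*$ a vertex on it. Walking around $W^*$ from $r^*$ splits it into two $(r^*,x)$-walks $A$ and $B$ for any vertex $x$ on $W^*$, and by minimality both $A$ and $B$ are shortest $(r^*,x)$-paths in $G$ (if either could be shortened, concatenating the shortcut with the other half would yield a shorter odd cycle, using that replacing $A$ by another shortest $(r^*,x)$-path changes the parity of the resulting cycle in a controlled way — this is exactly where the 3-path condition enters). Now I would consider the shortest-path tree $T_{r^*}$. For the vertex $x$ on $W^*$ that is "farthest" along $W^*$ in a suitable sense, look at the $(r^*,x)$-path $P_x$ in $T_{r^*}$; it is also a shortest $(r^*,x)$-path. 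Applying the 3-path condition to the three shortest $(r^*,x)$-paths $A$, $B$, $P_x$: at least one of the cycles $A\cup\overline{P_x}$, $B\cup\overline{P_x}$, $A\cup\overline{B}=W^*$ is odd, and since $W^*$ is odd we need a cleaner statement — so instead I would choose $x$ to be an endpoint of a non-tree edge. Concretely: since $W^*$ is a cycle it contains at least one edge $e=xy$ not in $T_{r^*}$ (a tree has no cycles). Then $W^*$ decomposes as: tree-ish path from $r^*$ toward $x$, the edge $e$, and a path back to $r^*$; comparing $W^*$ with $\cycle(T_{r^*},e)$, which consists of the tree $(r^*,x)$-path, the edge $e$, and the tree $(y,r^*)$-path, I would apply the 3-path condition to the walks $A$ (the $r^*\to x$ part of $W^*$), the tree path $r^*\to x$, and similarly on the $y$ side, to conclude that either $\cycle(T_{r^*},e)$ is odd — in which case, being a cycle of length at most $|W^*|$ (each half is a shortest path, so no longer than the corresponding half of $W^*$), it is a shortest odd cycle and we are done — or we can perform an exchange replacing one half of $W^*$ by the corresponding tree path without increasing the length and without changing oddness, and then induct on the number of non-tree edges of the current cycle.

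So the cleanest route is induction on $k = $ the number of edges of the candidate shortest odd cycle $W$ that do not lie in $T_{r^*}$, with invariant "$W$ is a shortest odd cycle through $r^*$." Base case $k=1$: then $W = \cycle(T_{r^*},e)$ for the unique non-tree edge $e$, exactly the desired form. Inductive step $k\ge 2$: pick a non-tree edge $e=xy$ of $W$ with $x$ an endpoint such that the $r^*\to x$ arc of $W$ uses only tree edges up to $e$ — more carefully, traverse $W$ from $r^*$ and let $e=xy$ be the first non-tree edge encountered, so the $r^*\to x$ arc $A_x$ of $W$ is exactly the $T_{r^*}$-path to $x$. Let $B$ be the remaining $x\to r^*$ arc of $W$ (so $W = A_x \cdot B$, and $B$ starts with $e$). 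Let $P_y$ be the $T_{r^*}$-path from $r^*$ to $y$, which is a shortest $(r^*,y)$-path, hence $|P_y|\le$ length of the $r^*\to y$ arc of $W$ along $B$ (that arc is $A_x\cdot e$, a shortest path since $W$ is a shortest cycle). Form the two cycles $C_1 = \cycle(T_{r^*},e) = A_x\cdot e\cdot \overline{P_y}$ and $C_2 = P_y\cdot \overline{(B\setminus e)}$ — wait, I need $C_1\cup C_2$ to recover $W$: indeed $C_1 \oplus C_2 = A_x\cdot e \cdot \overline{P_y} \oplus P_y \cdot \overline{(B\setminus e)}$, and the $P_y$ parts cancel, leaving $A_x\cdot e\cdot\overline{(B\setminus e)} = A_x\cdot B = W$ traversed appropriately. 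Hence $\CR(st,C_1)+\CR(st,C_2)\equiv\CR(st,W)\equiv 1\pmod 2$, so exactly one of $C_1,C_2$ is odd. Both have length at most $|W|$: for $C_1$, its $r^*\to y$ part is $A_x\cdot e$ (a shortest path) plus $\overline{P_y}$ (a shortest path to $y$), so $|C_1|\le 2\cdot|$that shortest path$|$; actually the crisp bound is $|C_1|\le |A_x\cdot e| + |P_y| \le |A_x\cdot e| + |A_x\cdot e| $ — I should instead compare directly: length of $W$ = $|A_x| + |B|$, length of $C_1 = |A_x| + 1 + |P_y|$, and length of $C_2 = |P_y| + |B| - 1$; since $|P_y|\le |A_x|+1$ we get $|C_1|\le 2|A_x|+2$ which is not obviously $\le |W|$ — so I'd argue minimality differently: $|P_y| \le |B|-1$ as well, because $B\setminus e$ is a path from $y$ to $r^*$ of length $|B|-1$ and $P_y$ is shortest; hence $|C_1| = |A_x|+1+|P_y| \le |A_x| + |B| = |W|$ and $|C_2| = |P_y| + |B| - 1 \le (|A_x|+1) + |B| - 1 = |A_x|+|B| = |W|$. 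Good — both candidates have length $\le |W| = $ optimum, so the odd one among them is itself a shortest odd cycle; and it has strictly fewer non-tree edges than $W$ (it replaces a whole arc of $W$ by a tree path), so the induction proceeds. The main obstacle, and the step deserving the most care, is verifying that the odd one of $\{C_1,C_2\}$ genuinely has \emph{fewer} non-tree edges and still passes through $r^*$; this requires choosing $e$ as the \emph{first} non-tree edge along $W$ so that $A_x$ is genuinely a tree path and the exchange strictly reduces the non-tree count, and it requires the parity-additivity of $\CR(st,\cdot)$ under the decomposition $W = C_1\oplus C_2$, which in turn rests on the elementary fact that crossings of $st$ with a closed curve that is a "sum" of two closed sub-curves add modulo $2$ (the shared arc $P_y$ is traversed once in each direction and contributes $0$ modulo $2$). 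Everything else is bookkeeping on walk lengths using that sub-arcs of a shortest cycle are shortest paths.
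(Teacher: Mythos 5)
Your overall route (an exchange/induction on the number of non-tree edges, using only additivity of $\CR(st,\cdot)$ modulo $2$) is viable and genuinely different from the paper's, but as written the inductive step does not go through. The invariant ``$W$ is a shortest odd \emph{cycle} through $r^*$'' is not preserved: the candidate $C_2=P_y\cdot\overline{(B\setminus e)}$ glues the tree path $P_y$ onto an arc of $W$, and these can share internal vertices, so $C_2$ is in general only a closed walk. Likewise $C_1=A_x\cdot e\cdot\overline{P_y}$ is \emph{not} $\cycle(T_{r^*},e)$; it is the closed walk through $r^*$ (the paper's $\walk(T_{r^*},e)$), whereas $\cycle(T_{r^*},e)$ is $e$ plus the tree path between $x$ and $y$, which omits the piece from $r^*$ down to their lowest common ancestor. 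Because of this, after one exchange the prefix of the current closed walk up to its first non-tree edge is only a walk in the tree, not the tree path, so the very property you use to choose $e$ (``$A_x$ is exactly the $T_{r^*}$-path'') fails from the second iteration on, and your base case no longer yields a cycle of the required form. All of this is repairable --- run the induction over odd closed walks through $r^*$ of length at most $|W^*|$; use that a walk in a tree between two vertices traverses each edge of the tree path an odd number of times and every other tree edge an even number of times, so the prefix may be replaced by the tree path without changing parity or increasing length; and at the end pass from $\walk(T_{r^*},e)$ to $\cycle(T_{r^*},e)$ by noting the doubled $r^*$-to-LCA segment cancels modulo $2$ while the length only decreases --- but none of these steps appears in your argument, and they are exactly what is needed. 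A smaller point: your preliminary ``3-path condition'' sublemma is false as phrased (among the three pairwise concatenations the number of odd ones is even, i.e.\ $0$ or $2$, so it can happen that only one is even); fortunately you never use it in the final induction, which relies only on the correct fact that the parities of $C_1$ and $C_2$ sum to $\CR(st,W)$.

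For comparison, the paper proves the lemma in one shot, without induction: it sums $\CR(st,\walk(T_{r^*},pq))$ over \emph{all} edges $pq$ of $W^*$; the tree-path contributions cancel in pairs modulo $2$, so the sum equals $\CR(st,W^*)=1$; hence some edge $p_0q_0$ of $W^*$ gives an odd walk, such an edge cannot lie in $T_{r^*}$, the walk has the same parity as $\cycle(T_{r^*},p_0q_0)$ (the $r^*$-to-LCA part is counted twice), and $|\cycle(T_{r^*},p_0q_0)|\le|\walk(T_{r^*},p_0q_0)|\le|W^*|$ because $r^*$ lies on $W^*$ and tree paths are shortest paths. Your patched exchange argument would use the same two facts (parity additivity and the walk-to-cycle reduction) but pays extra bookkeeping for the induction; the paper's direct parity count over the edges of $W^*$ avoids it entirely.
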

\begin{proof}
	For any points $p$ and $q$ of $P$, let $T_{r^*}[p \rightarrow q]$ be the 
	unique path contained in $T_{r^*}$ from $p$ to $q$.
	For every edge $pq$ of $G$, let $\walk(T_{r^*},pq)$ be the closed walk that follows 
	$T_{r^*}[r^*\rightarrow p]$, then the edge $pq$, and finally $T_{r^*}[q\rightarrow r^*]$.
	We then have the following relation modulo 2:
	\begin{align*}
		\sum_{pq\in W^*}& \CR(st,\walk(T_{r^*},pq)) \\ 
		&=~ 
		\sum_{pq\in W^*} \bigl( \CR(st,T_{r^*}[r^* \rightarrow p])+ \CR(st,pq) + 
						\CR(st,T_{r^*}[q \rightarrow r^*]) \bigr) \\
		&=~ \sum_{pq\in W^*} \CR(st,pq)  \\
		&=~ \CR(st,W^*) \\
		&=~ 1. 		
	\end{align*}
	In the second equality we have used that each path
	$T_{r^*}[r^*\rightarrow p]$ and its reverse $T_{r^*}[p\rightarrow r^*]$ 
	appears an even number of times in the sum,
	and thus cancel out modulo 2.
	Parity implies that, for some edge $p_0q_0$ of $W^*$, 
	we have $\CR(st,\walk(T_{r^*},p_0q_0))=1$.
	It must be that $p_0q_0\notin E(T_{r^*})$ because for each edge $pq$ of $T_{r^*}$
	it holds $\CR(st,\walk(T_{r^*},pq))=0$.
	
	Since $\CR(st,\walk(T_{r^*},p_0q_0))= \CR(st,\cycle(T_{r^*},p_0q_0))$ 
	because the path from $r^*$ to the lowest common ancestor of $p$ and $q$ in $T_{r^*}$ 
	is counted twice on the left side of the equality, 
	we have $\CR(st,\cycle(T_{r^*},p_0q_0))=1$. 

	Since $r^*$ is a vertex of $W^*$ and $p_0q_0$ is an edge of $W^*$,
	the length of $W^*$ is at least the length of $T_{r^*}[r^* \rightarrow p_0]$ plus $1$,
	for the edge $p_0q_0$,
	plus the length of $T_{r^*}[q_0 \rightarrow r^*]$. However, this second part
	is exactly the length of $\walk(T_{r^*},p_0q_0)$, which is at least the length of 
	$\cycle(T_{r^*},p_0q_0)$.

	We have shown that, for some edge $p_0q_0\in E(G)\setminus E(T_{r^*})$,
	the cycle $\cycle(T_{r^*},p_0q_0)$ is not longer than $W^*$ 
	and crosses $st$ an odd number of times. The result follows.
\end{proof}

Since we do not know a vertex $r^*$ in the shortest cycle of $G$, we just try all
possible roots as candidates. (This leads to the option of having a randomized algorithm,
by selecting some roots at random, for the case where the optimal solution is large.)
Thus, for each vertex $r$ of $G$, we fix a shortest-path tree $T_r$ from $r$ in $G$,
and then the size of the optimal solution is given by
\[
	\min \{ 1+d_G(r,p)+d_G(r,q) \mid
		r\in P,~ pq\in E(G)\setminus E(T_r),~
		\CR(st,\cycle(T_{r},pq))=1 \} .
\]

The values $\CR (st, \cycle(T_r,e))$ can be computed in constant amortized time per edge
with some easy bookkeeping, as follows. Consider a fixed tree $T_r$.
For each point $p\in P$ we store $N[p]$ as the parity of the number of crossings
of the path in $T_r$ from $r$ to $p$. When $p$ is not the root,
the value $N[p]$ can be computed from the value of its parent $\pi[p]$ in $T_r$
using that $N[p]=N[\pi[p]]+\CR(st,p\pi[p])$.
In the algorithm we have written it this way (lines 4--6), but
one can also compute the values at the time of computing the shortest path tree $T_r$.

We then have for each shortest-path tree $T_r$
\begin{align*}
	\forall pq\in E(G)\setminus E(T_r):&~~~~
	\CR (st, \cycle(T_r,pq))= N[p]+N[q]+\CR(st,pq) \pmod 2\\
	\forall pq\in E(T_r):&~~~~
	0= N[p]+N[q]+\CR(st,pq) \pmod 2
\end{align*}
because crossings that are counted twice cancel out modulo $2$.
In particular, the path in $T_r$ from $r$ to the lowest common ancestor of $p$ and $q$
is counted twice.
This implies that we can just check for \emph{all} edges $pq$ of $G$ whether
the sum  $N[p]+N[q]+\CR(st,pq)$ is $0$ modulo $2$.
The final resulting algorithm, denoted as \proc{GenericMinimumSeparation},
is given in Figure~\ref{fig:generic}.

\begin{figure}[htb]
\begin{center}
\ovalbox{~~~~
\begin{varwidth}{\linewidth}
\begin{codebox}
    \Procname{$\proc{GenericMinimumSeparation}(P,s,t)$}
    \li $\best \gets \infty$ // length of the best separation so far
    \li \For $r\in P$ \Do
    	\li $(\dist[~],\pi[~])\gets$ shortest path tree from $r$ in $G(P)$
		\\ \> \Comment Compute $N[~]$
    	\li $N[r]=0$
    	\li \For $p\in P\setminus \{r\}$ in non-decreasing values of $\dist[p]$ \Do
			\li $N[p]= N[\pi[p]]+ \CR(st,p\pi[p]) \pmod 2$
			\End
    	\li \For $pq\in E(G(P))$\Do
			\li \If $N[p]+N[q]+ \CR(pq,st) \pmod 2 =1$ \Then
				\li $\best \gets \min \{ \best, \dist[p]+\dist[q]+1\}$ 
				\End
			\End
		\End
    \li \Return $\best$
    \\[-2mm]
\end{codebox}
\end{varwidth}
~~~~}\end{center}
\caption{Adaptation of the generic algorithm to compute the minimum separation for unit disks.}
\label{fig:generic}
\end{figure}

Let us look into the time complexity of the algorithm.
For each point $r\in P$ we have to compute a shortest-path tree in $G$.
This can be done in $O(n\log n)$ in our case, 
as discussed in Section~\ref{sec:algorithm-sptree}. 
Then, for each edge $pq$ of $G$ some constant amount of work is done.
Thus for each point $r$ we spend  
$O(n\log n+|E(G)|)$.
This is cubic in the worst-case.
We could get an improved running time if we can treat all the edges of $G$
compactly. This is what we explain next.

\subsubsection{Compact treatment of edges}
\label{sec:quadratic}

From now on we will assume that $s$ is the origin and 
$t$ is the point $(0,\tau)$, with $\tau\ge 0$. Thus, the segment $st$
is vertical and $t$ is above $s$. The implementation just
assumes that $st$ is vertical with $s$ below $t$. 
A simple rigid transformation can be applied to
the input to get to this setting.

We will use the data structure in the following lemma.
It is essentially a multi-level data structure consisting
of a 2-dimensional range tree $T$ with
a data structure for nearest neighbour at each node of the secondary structure
of $T$.

\begin{lemma}
\label{lem:block}
	Let $B$ be a set of $n$ points with positive $x$-coordinates.
	We can preprocess $B$ in $O(n\log^3 n)$ time such that,
	for any query point $a$	with negative $x$-coordinate, 
	we can decide in $O(\log^3 n)$ time whether the set 
	$\{ b\in B \mid \text{$ab$ intersects $\sigma$ and $|ab|\le 1$}\}$
	is empty.
	The same data structure can handle queries
	to know whether the set
	$\{ b\in B \mid \text{$ab$ does not intersect $\sigma$ and $|ab|\le 1$}\}$
	is empty.
\end{lemma}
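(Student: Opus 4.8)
The plan is to reduce both emptiness queries to a pair of monotone range conditions on $b$ together with a proximity condition, and then realise these with a two‑level range tree augmented with nearest‑neighbour structures at its secondary nodes. Throughout, recall that $\sigma$ is the vertical segment $st$ with $s=(0,0)$ and $t=(0,\tau)$, $\tau\ge 0$. Fix a query point $a$ with $a_x<0$; since every $b\in B$ has $b_x>0$, the segment $ab$ meets the line $x=0$ in a single point $c(a,b)=(0,y(a,b))$ lying in the relative interior of $ab$, so $ab$ intersects $\sigma$ if and only if $0\le y(a,b)\le\tau$. A direct computation gives $y(a,b)=(a_yb_x-a_xb_y)/(b_x-a_x)$ with $b_x-a_x>0$; clearing this denominator and then dividing by $b_x>0$ and by $a_x<0$ (which flips the inequality) turns $y(a,b)\ge 0$ into
\[
  \sigma_s(b)\ \ge\ m_1(a),\qquad \sigma_s(b):=b_y/b_x,\quad m_1(a):=a_y/a_x,
\]
and, in the same way, $y(a,b)\le\tau$ into
\[
  \sigma_t(b)\ \le\ m_2(a),\qquad \sigma_t(b):=(b_y-\tau)/b_x,\quad m_2(a):=(a_y-\tau)/a_x,
\]
where $\sigma_s(b)$ and $\sigma_t(b)$ are just the slopes of $sb$ and $tb$. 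Hence ``$ab$ intersects $\sigma$ and $|ab|\le 1$'' asks whether $B$ contains a point that lies simultaneously in the suffix $\{\sigma_s\ge m_1(a)\}$ of $B$ ordered by $\sigma_s$, in the prefix $\{\sigma_t\le m_2(a)\}$ of $B$ ordered by $\sigma_t$, and inside the closed unit disk centred at $a$; the complementary event ``$ab$ does not intersect $\sigma$'' is $y(a,b)<0$ or $y(a,b)>\tau$, i.e.\ $\sigma_s(b)<m_1(a)$ or $\sigma_t(b)>m_2(a)$, a union of a prefix in the $\sigma_s$‑order and a suffix in the $\sigma_t$‑order.

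For the data structure I would build a two‑level range tree on $B$: the primary balanced search tree orders $B$ by $\sigma_s$; with each primary node we store a secondary balanced search tree ordering the points of its subtree by $\sigma_t$; and with each secondary node we store a static nearest‑neighbour structure (the Voronoi diagram of the points of its secondary subtree together with a planar point‑location structure). The first query is the usual range‑tree descent: find the $O(\log n)$ canonical primary nodes whose subtrees partition $\{\sigma_s\ge m_1(a)\}$, in each of them find the $O(\log n)$ canonical secondary nodes covering $\{\sigma_t\le m_2(a)\}$, and at each of these $O(\log^2 n)$ nodes query the stored nearest‑neighbour structure with $a$; the set is nonempty iff some query returns distance $\le 1$, for a total of $O(\log^3 n)$. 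The complementary query is answered by treating the two disjuncts on the same structure: for $\{\sigma_s<m_1(a)\}$ take the $O(\log n)$ canonical primary nodes of that prefix and, at each, query the nearest‑neighbour structure stored at the root of its secondary tree (which covers that whole primary subtree); for $\{\sigma_t>m_2(a)\}$ go to the secondary tree hanging from the primary root, which orders all of $B$ by $\sigma_t$, take the $O(\log n)$ canonical secondary nodes of that suffix, and query each. The complementary set is empty iff all of these queries fail, at cost $O(\log^2 n)$, which is within the claimed $O(\log^3 n)$.

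For the preprocessing bound: the primary tree and all secondary trees are built in $O(n\log n)$ time and $O(n\log n)$ space, so the nearest‑neighbour structures dominate. One such structure on $k$ points is built in $O(k\log k)$ time, hence over a single secondary tree of size $s$ the total build cost is $\sum_v O(|v|\log|v|)=O(s\log^2 s)$; summing over all primary nodes gives $O(\log^2 n)\sum_p s_p=O(n\log^3 n)$, because the secondary subtrees have total size $O(n\log n)$. This yields the stated $O(n\log^3 n)$ preprocessing time.

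The main obstacle is the reduction carried out in the first paragraph: pinning down the four inequalities with the correct senses (weak $\ge/\le$ for the intersection query versus strict $>/<$ for its complement), tracking the sign flip caused by $a_x<0$, disposing of the degenerate cases ($\tau=0$, or $b$ collinear with $s$ or with $t$), and — conceptually — observing that each of the two conditions carves out a \emph{contiguous} range once $B$ is sorted by the appropriate slope, which is exactly what makes an ordinary range tree (rather than a genuine half‑plane range‑searching structure) suffice. Everything after that is the standard multi‑level range‑tree bookkeeping.
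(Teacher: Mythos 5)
Your proposal is correct and essentially the paper's own proof: your slopes $\sigma_s(b),\sigma_t(b)$ and thresholds $m_1(a),m_2(a)$ are exactly the paper's map $\varphi(p)=(\varphi_1(p),\varphi_2(p))$ (derived there via point--line duality rather than a direct computation of the crossing with the $y$-axis), and the structure --- a two-level range tree in this slope space with nearest-neighbour structures at the secondary nodes, with the same $O(n\log^3 n)$ build and $O(\log^3 n)$ query accounting --- coincides. The only cosmetic difference is the complementary query, which the paper handles as two quadrant queries (noting the top-left quadrant of $\varphi(a)$ is empty) while you handle it as a primary prefix plus a secondary suffix; both are valid.
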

\begin{proof}
	We are going to use point-line duality and range trees.
	These are standard concepts in Computational Geometry;
	see for example~\cite[Chapters 5 and 8]{bkos-08}.
	We assume that the reader is familiar with the topic.
	Figure~\ref{fig:duality} may be helpful in the following discussion.

	\begin{figure}[htb]
		\centering
		\includegraphics[width=\textwidth]{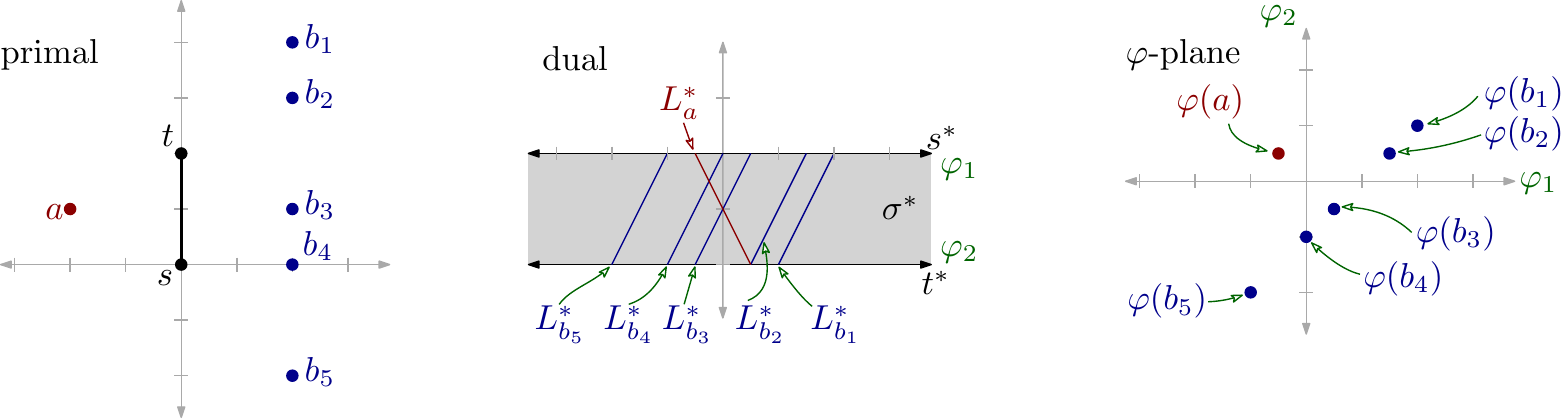}
		\caption{Transformation in the proof of Lemma~\ref{lem:block}.}
		\label{fig:duality}
	\end{figure}
	
	We use the following precise point-line duality: 
	the non-vertical line $\ell \equiv y=mx+c$ 
	is mapped to the point $\ell^*=(m,-c)$ and
	vice-versa. 
	Let $\LL$ be the set of non-vertical lines.
	Let $\sigma$ be the line segment $st$.
	Let $\sigma^*$ be the set of points dual to non-vertical lines that intersect $\sigma$.
	Thus
	\[
		\sigma^* ~=~ \{ l^* \mid \ell\in \LL, \ell\cap \sigma\neq \emptyset\}.
	\]
	Since we assumed that $s=(0,0)$ and $t=(0,\tau)$, 
	in the dual space $\sigma^*$ is the horizontal slab 
	\[
		\sigma^* ~=~ \{ (m,-c)\in \RR^2\mid 0\le c\le \tau\}.
	\]
	
	For every point $p\in \RR^2$, outside the $y$-axis, 
	let $L^* _p$ be the set of points dual
	to the lines through $p$ that intersect $\sigma$:
	\[
		L^*_p=\{ \ell^* \mid \ell\in \LL, p \in \ell, \text{ and } \sigma\cap \ell\not= \emptyset\}.
	\]
	In the dual space, $L^*_p$ is a segment with endpoints 
	$(\varphi_1(p),0)$ and $(\varphi_2(p),-\tau)$, 
	for some values $\varphi_1(p)$ and $\varphi_2(p)$
	that are easily computable.
	Namely, $\varphi_1(p)$ is the slope of the line through $p$
	and $(0,0)$ while $\varphi_2(p)$ is the slope of the line through $p$ and $(0,\tau)$.
	The segment $L^*_p$ is contained in the slab $\sigma^*$ and has the endpoints on
	different boundaries of $\sigma^*$.
	Finally, define the mapping $\varphi(p)=(\varphi_1(p),\varphi_2(p))$.
	Thus, $\varphi$ maps points in the plane with nonzero $x$-coordinate
	to points in the plane. 

	Let $a$ be any point to the left of the $y$-axis and 
	let $b$ be a point to the right of the $y$-axis.
	The segment $ab$ intersects $\sigma$ if and only if $L^*_a$ intersects $L^*_b$.
	Namely, an intersection of $L^*_a$ and $L^*_b$ is dual to the line
	through $a$ and $b$. The segments $L^*_a$ and $L^*_b$ intersect if
	and only if the order of their endpoints on the boundaries of $\sigma^*$ are reversed.
	Moreover, since $a$ is to the left of the $y$-axis and $b$ is to the right
	of the $y$-axis, if the segment $ab$ intersects $\sigma$,
    then $\varphi_1(a)$, the slope of the line through $a$ and $(0,0)$,
    is smaller than $\varphi_1(b)$, the slope of the line through $b$ and $(0,0)$.
	Thus we have the following property:
	\[
		ab \cap \sigma \neq \emptyset ~\Longleftrightarrow ~ 
		\varphi_1(a)\le \varphi_1(b)) \text{ and } \varphi_2(a)\ge \varphi_2(b)).
	\]			
	Given a point $a$ to the left of the $y$ axis, 
	the set of points $b\in B$ with the property
	that $ab$ intersects $\sigma$ corresponds to the points $b$ with $\varphi(b)$
	in the bottom-right quadrant with apex $\varphi(a)$. 	

	We can use a $2$-dimensional range tree to store the point set $\varphi(B)$,
	where each point $b\in B$ is identified with its image $\varphi(b)$. 
	Moreover, for each node $v$ in the secondary level of the range tree, we 
	store a data structure for nearest neighbours for the canonical set $P(v)$ of points
	that are stored below $v$ in the secondary structure.
	
	For any query $a\in A$, the points $b\in B$ such that $ab$ intersects
	$\sigma$ are obtained by querying the 2-dimensional range tree for the points
	of $\varphi(B)$ contained in the quadrant 
	\[
		\{(x,y)\mid  \varphi_1(a) \le x \text{ and } \varphi_2(a) \ge y\}.
	\]
	This means that we get the set
	$\{ b\in B \mid \text{$ab$ intersects $\sigma$}\}$
	as the union of canonical subsets 
	$P(v_1),\dots,P(v_k)$ for $k=O(\log^2 n)$ nodes in the secondary levels
	of the 2-dimensional range tree.
	For each such canonical subset $P(v_i)$, we query for the nearest neighbour of $a$.
	If for some $v_i$ we get a nearest neighbour at distance at most $1$ from $a$,
	then we know that $\{ b\in B \mid \text{$ab$ intersects $\sigma$ and $|ab|\le 1$}\}$
	is non-empty. Otherwise the set is empty.

	The construction time of the 2-dimensional range tree is $O(n\log n)$.
	Each point appears in $O(\log^2 n)$ canonical subsets $P(v)$.
	This means that $\sum_v |P(v)| = O( n\log^2 n)$, where the sum iterates over
	all nodes $v$ in the secondary data structure.
	Since for each node $v$ in the secondary level we build a data structure for
	nearest neighbours, which takes $O(|P(v)| \log |P(v)|)$,
	the total construction time is $O( n\log^3 n)$.
	For the query time, the standard 2-dimensionsal range tree takes
	$O(\log^2 n)$ time to find the $O(\log^2 n)$  nodes $v_1,\dots, v_k$ such
	that 
	\[
		\bigcup_{i=1}^k P(v_i) ~=~ \{ b\in B \mid \text{$ab$ intersects $\sigma$}\},
	\]
	and then we need additional $O(\log n)$ time per node to query for a nearest neighbor.

	Answering the queries for 
	$\{ b\in B \mid \text{$ab$ does not intersect $\sigma$ and $|ab|\le 1$}\}$
	is done similarly (and the same data structure works), we just have
	to query for 2 of the other quadrants. 
	(The top-left quadrant of $\varphi(a)$ is always empty.)
\end{proof}

Inside the data structure of Lemma~\ref{lem:block} we are using a data structure
for nearest neighbours with construction time $O(n\log n)$ and query time $O(\log n)$.
If we would use another data structure for nearest neighbours with construction
time $T_c(n)$ and query time $T_q(n)$, then the construction time 
in Lemma~\ref{lem:block} becomes $O(T_c(n\log^2 n))$ and the query time is
$O(T_q(n)\cdot \log^2 n)$.

From the theoretical perspective is would be more efficient to compute the union
\[
	\bigcup_{b\in B} \{ (x,y)\in \RR^2\mid x<0,~ |(x,y)b|\le 1,~ (x,y) 
			\text{ intersects } \sigma \}
\]
and make point location there. Since the regions cannot have many crossings,
good asymptotic bounds can be obtained. However, such approach seems to be 
only of theoretical interest and the improvement on the overall result
is rather marginal.

Consider now a fixed root $r$. Assume that we have computed 
the shortest path tree $T_r$ and the corresponding tables $\pi[~]$, $\dist[~]$ and $N[~]$,
as discussed in Section~\ref{sec:generic}.
We group the points by their distance from $r$:
\[
	W_i ~=~ \{ p\in P \mid \dist[p]=i \},~~~ i=0,1,\dots
\]
A standard property of BFS trees, that also holds here,
is that all the distances from the root for any two adjacent vertices differ by at most $1$.
That is, the neighbours of a point $p\in P$ in $G$ are contained in
$W_{\dist[p]-1}\cup W_{\dist[p]} \cup W_{\dist[p]+1}$.
We will exploit this property.

We make groups $L_i^j$ and $R_i^j$ (where $L$ stands for left and $R$ for right) 
defined by
\begin{align*}
	L_i^j ~&=~ \{ p\in P \mid \dist[p]=i,~ p.x<0,~ N[p]=j \} ,
	~~~\text{where $j=0,1$ and $i=0,1,\dots$}\\
	R_i^j ~&=~ \{ p\in P \mid \dist[p]=i,~ p.x>0,~ N[p]=j \} ,
	~~~\text{where $j=0,1$ and $i=0,1,\dots$}
\end{align*}
We are interested in edges $pq$ of $G$ such that $N[p]+N[q]+\CR(st,pq)=1 \pmod 2$. 
Up to symmetry (exchanging $p$ and $q$), this is equivalent to pairs of points
$(p,q)$ in one of the following two cases:
\begin{itemize}
	\item for some $i\in \NN$ and some $j\in \{0,1\}$, we have
			$p\in L_i^j\cup R_i^j$, 
			$q\in L_i^{1-j}\cup R_i^{1-j}\cup L_{i-1}^{1-j}\cup R_{i-1}^{1-j}$, 
			$|pq|\le 1$, and $pq$ does not cross $st$;
	\item for some $i\in \NN$ and some $j\in \{0,1\}$, we have
			$p\in L_i^j\cup R_i^j$, 
			$q\in L_i^{j}\cup R_i^{j}\cup L_{i-1}^{j}\cup R_{i-1}^{j}$, 
			$|pq|\le 1$, and $pq$ crosses $st$.
\end{itemize}
Each one of these cases can be solved efficiently.
Up to symmetry, we have the following cases:
\begin{itemize}
\item If we want to search the candidates $(p,q)\in L_i^j\times L_{i'}^{1-j}$
	(that cannot cross $st$ since they are on the same side
	of the $y$-axis), we first preprocess $L_{i'}^{1-j}$ for nearest neighbours.
	Then, for each point $p$ in $L_i^j$, we query the data structure
	to find its nearest neighbour $q_p$ in $L_i^j$.
	If for some $p$ we get that $|pq_p|\le 1$, then we have
	obtained an edge $pq_p$ of $G$ with $\CR(\cycle(T_r,pq_p))=1$
	and $\dist[p]+\dist[q_p]+1=i+i'+1$.
	If for each $p$ we have $|pq_p|> 1$, then $L_i^j\times L_{i'}^{1-j}$
	does not contain any edge of $G$.
	The overall running time, if $m=|L_i^j|+|L_{i'}^{1-j}|$, is
	$O(m\log m)$.
\item If we want to search the candidates $(p,q)\in L_i^j\times R_{i'}^{j}$
	such that $pq$ crosses $st$,
	we first preprocess $R_{i'}^{1-j}$ as discussed in Lemma~\ref{lem:block}
	into a data structure.
	Then, for each point $p\in L_i^j$ we query the data structure (for crossing $st$).
	If we get some nonempty set, then there is 
	an edge $pq$ of $G$ with $p\in L_i^j$, $q\in R_{i'}^{j}$, $\CR(\cycle(T_r,pq))=1$
	and $\dist[p]+\dist[q]+1=i+i'+1$.
	Otherwise, there is no edge $pq\in L_i^j\times R_{i'}^{j}$ that crosses $st$.
	The overall running time, if $m=|L_i^j|+|R_{i'}^{j}|$, is
	$O(m\log^3 m)$.
\item If we want to search the candidates $(p,q)\in L_i^j\times R_{i'}^{1-j}$
	such that $pq$ does not cross $st$,
	we first preprocess $R_{i'}^{1-j}$ as in Lemma~\ref{lem:block}
	into a data structure.
	Then, for each point $p\in L_i^j$ we query the data structure (for not crossing $st$).
	The remaining discussion is like in the previous item.
\end{itemize}
We conclude that each of the cases can be done in $O(m\log^3 m)$ worst-case time, where
$m$ is the number of points involved in the case.
Iterating over all possible values $i$, 
it is now easy to convert this into an algorithm that spends $O(n\log^3 n)$ time
per root $r$. We summarize the result we have obtained. This improves for the
case of unit disks the previous, generic algorithm.

\begin{theorem}
	The minimum-separation problem for $n$ unit disks 
	can be solved in $O(n^2 \log^3 n)$ time.
\end{theorem}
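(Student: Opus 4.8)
The plan is to bootstrap from the characterization already established in Section~\ref{sec:generic}. By the displayed if-and-only-if characterization there and the 3-path lemma, the size of an optimal separation equals
\[
	\min \{ \dist[p]+\dist[q]+1 \mid r\in P,\ pq\in E(G),\ N[p]+N[q]+\CR(st,pq)\equiv 1 \pmod 2 \},
\]
where, for a fixed root $r$, the tables $\dist[\cdot]$, $\pi[\cdot]$ and $N[\cdot]$ are the ones attached to a shortest-path tree $T_r$. This is exactly the value returned by \proc{GenericMinimumSeparation}, whose only costly step is the explicit loop over all edges of $G$. So I would keep the outer loop over the $n$ roots — each iteration building $T_r$ together with the three tables in $O(n\log n)$ time using the algorithm of Section~\ref{sec:algorithm-sptree} — and replace the inner loop by an implicit scan of $E(G)$ that runs in $O(n\log^3 n)$ time. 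This gives $O(n^2\log^3 n)$ overall. Correctness is inherited: any pair $(r,pq)$ that the scan flags gives the cycle $\cycle(T_r,pq)$, of length at most $\dist[p]+\dist[q]+1$, which crosses $st$ an odd number of times and whose disks therefore separate $s$ and $t$; and the 3-path lemma guarantees the minimum is attained for a root that lies on some optimal cycle.

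The heart of the matter is a geometric reduction of this edge scan. For a fixed root $r$, group $P$ into the layers $W_i=\{p\in P:\dist[p]=i\}$; since $T_r$ is a BFS tree, the endpoints of any edge of $G$ lie in layers that differ by at most one. Refine each layer into the four sets $L_i^0,L_i^1,R_i^0,R_i^1$ according to the sign of the $x$-coordinate (recall $s$ and $t$ are on the $y$-axis) and the parity $N[\cdot]$. Classify an edge $pq$ satisfying $N[p]+N[q]+\CR(st,pq)\equiv 1$ with, say, $\dist[p]=i\ge\dist[q]=i'\in\{i-1,i\}$ and $j=N[p]$: if $pq$ does \emph{not} cross $st$ then $N[q]=1-j$, so $q$ lies in one of $L_{i'}^{1-j}$, $R_{i'}^{1-j}$; if $pq$ \emph{does} cross $st$ then $p$ and $q$ lie on opposite sides of the $y$-axis and $N[q]=j$. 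Splitting further by the side of each endpoint, this leaves a constant number of ``cases'', each asking: given two of the refined sets, is there a pair $(p,q)$ across them with $|pq|\le 1$ that does (respectively does not) cross $st$?

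Each case is settled by one of two subroutines. If the two sets lie on the same side of the $y$-axis, no segment between them crosses $st$, and the question is a plain nearest-neighbour query: build a static nearest-neighbour structure on one set and query every point of the other; a returned distance $\le 1$ certifies the case is non-empty, at total cost $O(m\log m)$ with $m$ the combined size. If the two sets lie on opposite sides — in which case crossing $st$ is \emph{not} automatic, since $st$ is only part of the $y$-axis — this is precisely the emptiness test of Lemma~\ref{lem:block}: use its query for segments that intersect $\sigma$ in the crossing case and its query for segments that do not intersect $\sigma$ in the non-crossing case, at cost $O(m\log^3 m)$ for preprocessing plus $O(\log^3 m)$ per point. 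Whenever a case comes back non-empty, update the running minimum with $i+i'+1$. Because we run \emph{all} the $O(1)$ cases for every relevant triple $(i,i',j)$, we recover the true minimum of $\dist[p]+\dist[q]+1$ and not merely a feasibility certificate — a point that needs care, since each data structure only reports emptiness.

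Finally, the cost bookkeeping. At a fixed root, every point of $W_i$ lies in exactly one refined set, and that set participates in only a constant number of cases — those that pair layer $i$ with layers $i-1$ or $i$, over both sides and both parities. Hence the sizes of all the cases at a fixed root sum to $O(\sum_i |W_i|)=O(n)$, and since $\log^3 m\le\log^3 n$ for each case, the work per root is $O(n\log^3 n)$; summing over the $n$ roots yields $O(n^2\log^3 n)$. The step I expect to require the most care is the case analysis: verifying that, once the $p\leftrightarrow q$ symmetry has been fixed, every edge of odd crossing parity falls into one of the enumerated cases, that the two query flavours of Lemma~\ref{lem:block} are matched to the correct parity pattern, and that no layer is touched by more than $O(1)$ cases, so that the sum over layers is genuinely linear in $n$.
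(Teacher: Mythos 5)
Your proposal is correct and follows essentially the same route as the paper: per root, build the shortest-path tree with the tables $N[\cdot]$ and the refined layer sets $L_i^j,R_i^j$ in $O(n\log n)$ time, settle the constant number of cases per layer pair via plain nearest-neighbour structures (same side) and the data structure of Lemma~\ref{lem:block} (opposite sides), and sum $O(|W_{i-1}\cup W_i|\log^3 n)$ over the disjoint layers to get $O(n\log^3 n)$ per root, with correctness inherited from the generic algorithm. The case analysis you flag as delicate is exactly the enumeration the paper carries out in Section~\ref{sec:quadratic} before stating the theorem.
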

\begin{proof}
	Let $P$ be the centers of the disks and,
	as before, consider the graph $G=\GG(P)$.
	For each root $r\in P$ we build the shortest-path tree
	and the sets $W_i,L_i^0,L_i^1,R_i^0,L_i^1$ for all $i$ in $O(n\log n)$ time.
	We then have at most $n$ iterations where, at iteration $i$,
	we spend $O(|W_i\cup W_{i-1}|\log^3 |W_i\cup W_{i-1}|)$ time.
	Since the sets $W_i$ are disjoint,
	adding over $i$, this means that we spend $O(n\log^3 n)$ time per root $r\in P$.
	
	Correctness follows from the foregoing discussion and the fact 
	that the algorithm is computing the same as the generic algorithm.
\end{proof}

\out{
\begin{figure}[htb]
\begin{center}
\ovalbox{~~~~
\begin{varwidth}{\linewidth}
\begin{codebox}
    \Procname{~~\Comment Work for root $r\in P$}
   	\li $(\dist[~],\pi[~])\gets$ shortest path tree from $r$ in $\GG(P)$
	\li Compute the levels $W_0,W_1,\dots$
	\li \For $i=0\dots n$ \Do
		\li Compute $N[p]$ for each $p$ in $W_i$
		\li Compute $L^0_i$, $L^1_i$, $R^0_i$, $R^1_i$
		\End
    \li $i=1$
    \li \While $2i< \best$ and $W_i\neq\emptyset$ \Do
		\\ \> \Comment within each side of the $y$-axis 
		\li search candidates in \\
			\> ~~~$L^0_i\times L^1_{i-1}$, $L^1_i\times L^0_{i-1}$, 
				$R^0_i\times R^1_{i-1}$, $R^1_i\times R^0_{i-1}$,
				$L^0_i\times L^1_{i}$ and $R^0_i\times R^1_{i}$
		\\ \> \Comment across $y$-axis crosing $\sigma$
		\li search candidates crossing $\sigma$ in \\
			\> ~~~ $L^0_i\times R^0_{i-1}$, 
				$L^1_i\times R^1_{i-1}$, $L^0_{i-1}\times R^0_{i}$, 
				$L^1_{i-1}\times R^1_{i}$, $L^0_i\times R^0_{i}$ 
				and $L^1_i\times R^1_{i}$
		\\ \> \Comment across $y$-axis not crosing $\sigma$			
		\li search candidates not crossing $\sigma$ in \\
			\> ~~~ $L^0_i\times R^1_{i-1}$, 
				$L^1_i\times R^0_{i-1}$, $L^1_{i-1}\times R^0_{i}$, 
				$L^0_{i-1}\times R^1_{i}$, $L^0_i\times R^1_{i}$ 
				and $L^1_i\times R^0_{i}$		
		\li $i \gets i+1$
		\End
    \\[-2mm]
\end{codebox}
\end{varwidth}
~~~~}\end{center}
\caption{Work for each vertex in the new algorithm for minimum separation with unit disks.}
\label{fig:compact}
\end{figure}
}

\begin{figure}[htb]
\begin{center}
\scalebox{.85}{%
\ovalbox{~~~~
\begin{varwidth}{\linewidth}
\begin{codebox}
    \Procname{$\proc{SeparationUnitDisksCompact}(P,s,t)$}
    \li $\best \gets n+1$ // length of the best separation so far
    \li \For $r\in P$ \Do
    	\li $(\dist[~],\pi[~])\gets$ shortest path tree from $r$ in $\GG(P)$
		\\ \> \Comment Compute the levels $W_i$
    	\li \For $i=0\dots n$ \Do
			\li $W_i\gets$ new empty list
			\End
    	\li \For $p\in P$ \Do
			\li add $p$ to $W_{\dist[p]}$
			\End		
		\\ \> \Comment Compute $N[~]$ for the elements of $W_i$ and
		\\ \> \Comment and construct $L_i^0,L_i^1,R_i^0,R_i^1$
    	\li $N[r]=0$
		\li \For $i=1\dots n$ \Do
			\li \For $p\in W_i$ \Do
				\li $N[p]= N[\pi[p]]+ \CR(st,p\pi[p]) \pmod 2$
				\li \If $p$ to the left of the $y$-axis \Then
					\li add $p$ to $L^{N[p]}_i$
					\End
				\li \If $p$ to the right of the $y$-axis \Then
					\li add $p$ to $R^{N[p]}_i$
					\End
				\End
			\End
    	\li $i=1$
    	\li \While $2i< \best$ and $W_i\neq \emptyset$ \Do
			\\ \>\> \Comment length $2i$; within each side of the $y$-axis 
			\li search candidates in $L^0_i\times L^1_{i-1}$
			\li search candidates in $L^1_i\times L^0_{i-1}$
			\li search candidates in $R^0_i\times R^1_{i-1}$
			\li search candidates in $R^1_i\times R^0_{i-1}$
			\\ \>\> \Comment length $2i$; across $y$-axis crosing $\sigma$
			\li search candidates in $L^0_i\times R^0_{i-1}$ crossing $\sigma$
			\li search candidates in $L^1_i\times R^1_{i-1}$ crossing $\sigma$
			\li search candidates in $L^0_{i-1}\times R^0_{i}$ crossing $\sigma$
			\li search candidates in $L^1_{i-1}\times R^1_{i}$ crossing $\sigma$
			\\ \>\> \Comment length $2i$; across $y$-axis not crosing $\sigma$			
			\li search candidates in $L^0_i\times R^1_{i-1}$ not crossing $\sigma$
			\li search candidates in $L^1_i\times R^0_{i-1}$ not crossing $\sigma$
			\li search candidates in $L^0_{i-1}\times R^1_{i}$ not crossing $\sigma$
			\li search candidates in $L^1_{i-1}\times R^0_{i}$ not crossing $\sigma$
			\\ \>\> \Comment length $2i+1$; within each side of the $y$-axis 
			\li search candidates in $L^0_i\times L^1_{i}$
			\li search candidates in $R^0_i\times R^1_{i}$
			\\ \>\> \Comment length $2i+1$; across $y$-axis crosing $\sigma$
			\li search candidates in $L^0_i\times R^0_{i}$ crossing $\sigma$
			\li search candidates in $L^1_i\times R^1_{i}$ crossing $\sigma$
			\\ \>\> \Comment length $2i+1$; across $y$-axis not crosing $\sigma$			
			\li search candidates in $L^0_i\times R^1_{i}$ not crossing $\sigma$
			\li search candidates in $L^1_i\times R^0_{i}$ not crossing $\sigma$
			\li $i \gets i+1$
			\End
		\End
    \li \Return $\best$
    \\[-2mm]
\end{codebox}
\end{varwidth}~~~~}}
\end{center}
\caption{New algorithm for minimum separation with unit disks.}
\label{fig:FULLAlgorithm}
\end{figure}

The resulting new algorithm is given in Figure~\ref{fig:FULLAlgorithm}.
As before, the variable $\best$ stores the length of the shortest cycle (or actually
rooted closed walk) that we have found so far. We can start setting $\best=n+1$
at start.
If eventually we finish with the value $\best=n+1$, it means that there is no
feasible solution for the separation problem.
When we consider a root $r$ we are interested in closed walks rooted at $r$
and length at most $\best$. Since any closed walk through a vertex of $W_i$ has
length at least $2i$, we only need to consider indices $i$ such that $2i<\best$.
Moreover (and this is not described in the algorithm, but it is done
in the implementation), we can consider first
the pairs that give walks for length $2i$ first, like for example $L^0_i\times L^1_{i-1}$ 
and then the ones that give length  $2i+1$, like for example $L^0_i\times L^1_i$.
If we use this order, as soon as we find an edge in 
the while-loop, we can finish the work for the root $r$, and move onto the next root.

\section{Implementation and experiments}
\label{sec:implementation}

We have implemented the algorithms of Section~\ref{sec:algorithms}
in C++ using CGAL version~4.6.3~\cite{cgal}
because it provides the more complex procedures we need:
Delaunay triangulations and Voronoi diagrams~\cite{cgal:k-vda2-15a}, 
range trees~\cite{cgal:n-rstd-15a}, and nearest neighboours~\cite{cgal:tf-ssd-15a}.
Although in some cases we had to make small modifications, it was
very helpful to have the CGAL code available as a starting point. 
The coordinates of the points were Cartesian doubles.

Experiments were carried out in a laptop with CPU i7-6700HQ at 2.60 Ghz, 
8GB of RAM, and Windows 10. \emph{All times we report are in seconds.}

\paragraph{Data generation}
Data points were generated uniformly at random in the following polygonal domains:
rectangles without holes, rectangles with a "small" rectangular hole, rectangles
with a "large" rectangular hole, rectangles with 4 "small" rectangular holes, 
and rectangles with 4 "large" rectangular holes. The precise proportions of the domains
with holes are in Figures~\ref{fig:data_generation1} and~\ref{fig:data_generation2}.
We generated 1K, 2K, 5K, 10K, 20K and 50K points for the cases where the outer rectangle
has sizes $4\times 1$, $8\times 2$,\ldots, $128\times 32$.
The data was generated once and stored.
For the minimum-separation problem $s$ was placed in the middle of a hole and 
$t$ vertically above $s$ in the outer face.  
Some of these domains are not meaningful for the minimum-separation problem 
because the disks centered at the points cover $s$. 

\begin{figure}
	\includegraphics[width=\textwidth,page=1]{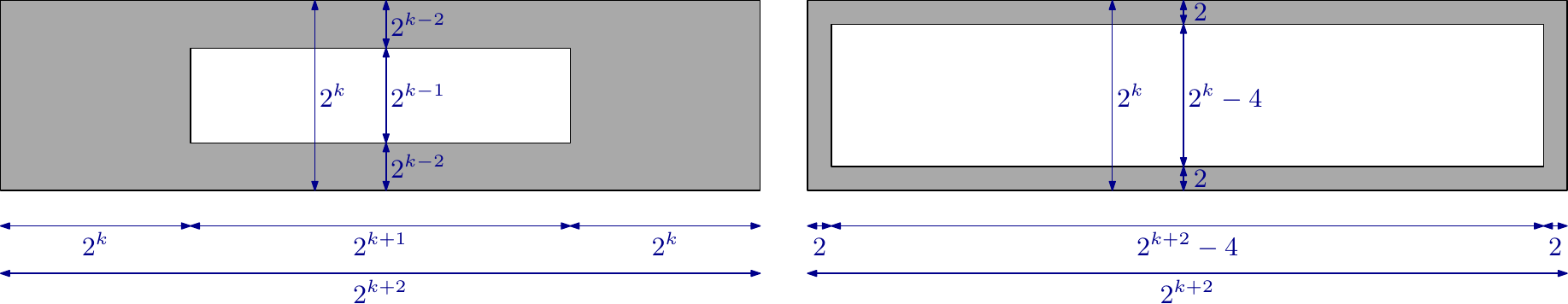}
	\caption{Data generation with a small hole (left) and a large hole (right).}
	\label{fig:data_generation1}
\end{figure}

\begin{figure}
	\includegraphics[width=\textwidth,page=2]{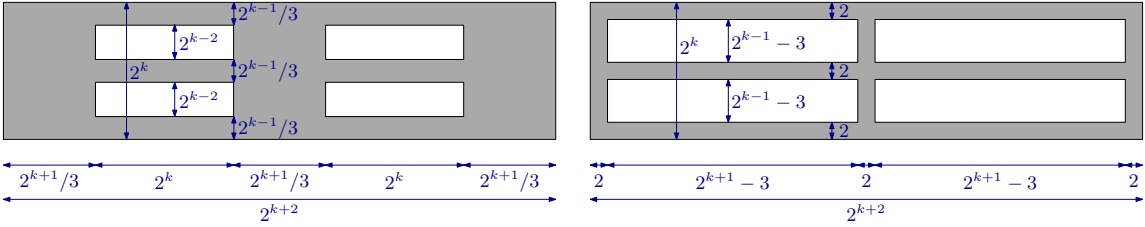}
	\caption{Data generation with four small holes (right) and four large holes (right).}
	\label{fig:data_generation2}
\end{figure}

\paragraph{Shortest-path tree in unit-disk graphs}
We have implemented the algorithm described in Section~\ref{sec:algorithm-sptree}.
For the shortest-path tree we used the Delaunay triangulation as provided by CGAL.
The data structure for nearest neighbour queries is a small extension of
the one provided by~\cite{cgal:k-vda2-15a}, which in turn is based on the Delaunay triangulation.
When making a query for nearest neighbour of $p$ in $W_{i-1}$ (line 17 in Figure~\ref{fig:BFS}), 
we have the option to provide an extra parameter that acts as some sort of hint: 
if the nearest neighbour is near the hint, the algorithm is faster. 
For our implementation, we exploit this as follows. 
Consider an iteration of the while loop (lines 13--22). If the point $q$ is from $W_{i-1}$
then we use a point in a face of $DT(W_{i-1})$ incident to $q$ as the hint
for all the points $p$ considered in the iteration.
If the point $q$ is not from $W_{i-1}$, then we already know that $q\in W_i$ and 
thus $\pi[q]\in W_{i-1}$. In this case we use use a point in a face of $DT(W_{i-1})$
incident to $\pi[q]$
as hint for all the points $p$ considered in the iteration.
Using such hints reduced the running time substantially, 
so we used this feature in the implementation.
Note that this improvement does not come with guarantees in the worst-case.
In the tables we refer to this algorithm as \DEF{SSSP}.

We compared the implementation with two obvious alternative algorithms to
compute shortest-path trees.
The first alternative is to build the graph $G=\GG(P)$ explicitly. 
Thus, for each pair of points $p,q$ we check whether their distance is
at most one and add an edge to a graph data structure.
We can then use breadth-first-search (BFS) from the given root $r$.
The preprocessing is quadratic, and the time spent to compute
a shortest-path tree depends on the density of the graph $G$.
In the tables we refer to this algorithm as \DEF{BFS}.

The second alternative we consider is to use a unit-length grid. 
Two points $(x,y)$ and $(x',y')$ are in the same grid cell
if and only if 
$(\lfloor x\rfloor ,\lfloor y\rfloor)=(\lfloor x'\rfloor ,\lfloor y'\rfloor)$.
We store all the points of a grid cell $c$ in a list $\ell(c)$.
The non-empty lists $\ell(c)$ are stored in a dictionary,
where the bottom-left corner of the cell is used as key.
We can then run some sort of BFS using this structure. 
The list $\ell(c)$ for a cell $c$ maintains the points that
have not been visited by the BFS tree yet. When processing
a point $p$ in a cell $c$, we have to treat all the points 
in the lists of $c$ and its $8$ adjacent cells as candidate points.
Any point that is adjacent to $p$ is then removed from the list of its cell. 
The preprocessing is linear, and the time spent to compute
a shortest-path tree depends on the distribution of the points.
It is easy to produce cases where the algorithm would need quadratic time.
For each shortest-path tree we compute the lists and the dictionary anew.
(This step is very fast in any case.)
In the tables we refer to this algorithm as \DEF{grid}.

As mentioned earlier, we did not implement the algorithm of Chan and Skrepetos~\cite{ChanS16}
because of time constraints. We expect that it would work good.

The measured times are in Tables~\ref{table1}--\ref{table5}.
For SSSP and BFS we report the preprocessing time that is independent of the source
(like building the Delaunay triangulation or building the graph) and the average
time spent for a shortest-path tree over 50 choices of the root.
For grid we just report the total running time; assigning points to
the grid cells and putting them into a dictionary is almost negligible.
As it can be seen, the results for SSSP are very much independent of the shape and,
for dense point sets it outperforms the other algorithms.

While the algorithm SSSP has guarantees in the worst case,
for BFS and grid one can construct instances 
where the behavior will be substantially bad. 
For example, to the instance with 10K points in a rectangle of size $32\times 8$ 
with a small hole we added 1K points quite cluttered. 
The increase in time with respect to the original instance 
was for SSSP 9,7\% (preprocessing) and 13,6\% (one root), for BFS it was
21,9\% (preprocessing) and 56,5\% (one root), and for grid it was 25\%.

\begin{table}[ht]
\begin{tabular}{l*{6}{r}}
\textbf{Rectangle without holes} & \multicolumn{6}{c}{20K points}\\						
size rectangle	&	$4\times 1$	&	$8\times 2$	&	$16\times 4$	&	$32\times 8$	&	$64\times 16$	&	$128\times 32$	\\
\hline
SSSP preprocessing	&	0.018	&	0.018	&	0.018	&	0.018	&	0.019	&	0.021	\\
SSSP average/root	&	0.011	&	0.012	&	0.012	&	0.012	&	0.013	&	0.013	\\
BFS preprocessing	&	18.70	&	13.46	&	12.03	&	11.40	&	11.32	&	11.13	\\
BFS average/root	&	2.437	&	1.018	&	0.321	&	0.069	&	0.017	&	0.005	\\
grid				&	1.309	&	1.130	&	0.474	&	0.160	&	0.060	&	0.035\vspace{.2cm}	\\
\hline
  & \multicolumn{6}{c}{50K points}\\						
\hline
SSSP preprocessing	&	0.051	&	0.050	&	0.053	&	0.051	&	0.051	&	0.053	\\
SSSP average/root	&	0.034	&	0.037	&	0.037	&	0.036	&	0.035	&	0.036	\\
BFS preprocessing	&	$>$2min	&	86.12	&	74.76	&	74.15	&	72.41	&	71.49	\\
BFS average/root	& \hspace{-.7cm}memory limit & 6.524	&	2.422	&	0.510	&	0.119	&	0.035	\\
grid				&	6.297	&	7.125	&	3.188	&	0.923	&	0.301	&	0.139
\end{tabular}
\caption{Times for shortest paths in rectangles without holes.}
\label{table1}
\end{table}

\begin{table}[ht]
\begin{tabular}{l*{6}{r}}
\textbf{Rectangle 1 small hole}& \multicolumn{6}{c}{10K points}\\						
size rectangle	&	$4\times 1$	&	$8\times 2$	&	$16\times 4$	&	$32\times 8$	&	$64\times 16$	&	$128\times 32$	\\
\hline
SSSP preprocessing	&	0.011	&	0.012	&	0.009	&	0.010	&	0.010	&	0.009	\\
SSSP average/root	&	0.004	&	0.005	&	0.005	&	0.005	&	0.006	&	0.006	\\
BFS preprocessing	&	3.724	&	3.033	&	2.890	&	2.826	&	2.874	&	2.841	\\
BFS average/root	&	0.587	&	0.248	&	0.078	&	0.021	&	0.006	&	0.002	\\
grid				&	0.258	&	0.313	&	0.119	&	0.049	&	0.022	&	0.015	\vspace{.2cm}	\\
\hline
  & \multicolumn{6}{c}{20K points}\\						
\hline
SSSP preprocessing	&	0.019	&	0.019	&	0.019	&	0.019	&	0.018	&	0.023	\\
SSSP average/root	&	0.010	&	0.012	&	0.011	&	0.012	&	0.013	&	0.013	\\
BFS preprocessing	&	15.22	&	13.47	&	11.51	&	11.66	&	11.73	&	11.38	\\
BFS average/root	&	2.402	&	1.045	&	0.369	&	0.088	&	0.023	&	0.006	\\
grid				&	1.122	&	1.339	&	0.461	&	0.181	&	0.074	&	0.036
\end{tabular}
\caption{Times for shortest paths in rectangles with a small hole.}
\label{table2}
\end{table}

\begin{table}[ht]
\begin{tabular}{l*{3}{r}|*{3}{r}}
\textbf{Rectangle 1 large hole} & \multicolumn{3}{c|}{5K points} & \multicolumn{3}{c}{10K points}\\
size rectangle	&	$32\times 8$	&	$64\times 16$	&	$128\times 32$	&	$32\times 8$	&	$64\times 16$	&	$128\times 32$\\						
\hline
SSSP preprocessing	&	0.004	&	0.005	&	0.005	&	0.009	&	0.010	&	0.010	\\
SSSP average/root	&	0.002	&	0.002	&	0.002	&	0.005	&	0.005	&	0.005	\\
BFS preprocessing	&	0.751	&	0.767	&	0.742	&	2.783	&	3.175	&	2.804	\\
BFS average/root	&	0.006	&	0.003	&	0.002	&	0.025	&	0.012	&	0.006	\\
grid				&	0.018	&	0.012	&	0.008	&	0.053	&	0.032	&	0.022
\end{tabular}
\caption{Times for shortest paths in rectangles with a large hole.}
\label{table3}
\end{table}

\begin{table}[ht]
\begin{tabular}{l*{3}{r}|*{3}{r}}
\textbf{Rectangle 4 small holes\hspace{-.2cm}} & \multicolumn{3}{c|}{10K points} & \multicolumn{3}{c}{20K points}\\
size rectangle	&	$32\times 8$	&	$64\times 16$	&	$128\times 32$	&	$32\times 8$	&	$64\times 16$	&	$128\times 32$\\						
\hline
SSSP preprocessing	&	0.010	&	0.011	&	0.009	&	0.018	&	0.018	&	0.019	\\
SSSP average/root	&	0.005	&	0.006	&	0.007	&	0.012	&	0.013	&	0.014	\\
BFS preprocessing	&	2.925	&	2.861	&	2.866	&	11.97	&	11.93	&	11.59	\\
BFS average/root	&	0.020	&	0.006	&	0.002	&	0.085	&	0.022	&	0.006	\\
grid				&	0.048	&	0.024	&	0.016	&	0.190	&	0.070	&	0.040
\end{tabular}
\caption{Times for shortest paths in rectangles with 4 small holes.}
\label{table4}
\end{table}

\begin{table}[ht]
\begin{tabular}{l*{3}{r}|*{3}{r}}
\textbf{Rectangle 4 large holes} & \multicolumn{3}{c|}{5K points} & \multicolumn{3}{c}{10K points}\\
size rectangle	&	$32\times 8$	&	$64\times 16$	&	$128\times 32$	&	$32\times 8$	&	$64\times 16$	&	$128\times 32$\\						
\hline
SSSP preprocessing	&	0.004	&	0.005	&	0.005	&	0.013	&	0.015	&	0.009	\\
SSSP average/root	&	0.003	&	0.003	&	0.003	&	0.006	&	0.005	&	0.005	\\
BFS preprocessing	&	0.715	&	0.734	&	0.717	&	2.897	&	2.910	&	3.182	\\
BFS average/root	&	0.005	&	0.002	&	0.001	&	0.019	&	0.008	&	0.004	\\
grid				&	0.013	&	0.010	&	0.008	&	0.045	&	0.026	&	0.020
\end{tabular}
\caption{Times for shortest paths in rectangles with 4 large holes.}
\label{table5}
\end{table}

\paragraph{Minimum separation with unit-disk}
We have implemented the algorithm \proc{GenericMinimumSeparation}
and the new algorithm based on a compact treatment of the edges.
The shortest-path trees are constructed using the algorithm of Section~\ref{sec:algorithm-sptree}.
The table $N[~]$ and the sets $L^0_i,L^1_i,R^0_i,R^1_i$
are constructed at the time of computing the shortest-path tree.

In the data structure of Lemma~\ref{lem:block},
we do use a 2-dimensional tree as the primary structure, making some modifications
of~\cite{cgal:n-rstd-15a}. In the secondary structure, for nearest neighbour, 
instead of using Voronoi diagrams, we used a small modification of
the $kd$-trees implemented in~\cite{cgal:tf-ssd-15a}. 
In some preliminary experiments this seemed to be a better choice.
In our modification, we make a range search query for points at distance at most $1$,
and finish the search whenever we get the first point.
In the new algorithm, before calling to the function to candidates pairs, like for example $L_i^0\times R^i_1$,
we test that both sets are non-empty. This simple test reduced the time by 30-50\%
in our test cases.

Besides the new algorithm we also implemented the generic algorithm of Section~\ref{sec:generic}.
The measured times are in Tables~\ref{table6}--\ref{table7}.
For the case of $4$ holes we always put $t$ above the rectangle and $s$ in one hole. 
It seems that the choice of the hole does not substantially affect the experimental time in our setting.

To show that our new algorithm can work substantially faster than the generic algorithm,
we created an instance where we expect so.
For this we take the rectangle of size $32\times 8$ with one small hole,
the original 2K points, and add 500 extra points on a vertical strip of width $1$ within the domain
and symmetric with respect to segment $st$. The generic algorithm took 435 seconds and the new
algorithm took 94 seconds. If instead we add 1K points, the generic algorithm takes more than 15 minutes and the new algorithm takes 173 seconds.

\begin{table}[ht]
\begin{tabular}{l*{4}{r}}
\textbf{Rectangle 1 small hole} & \multicolumn{4}{c}{2K points}\\
size rectangle	&	$8\times 2$	&	$16\times 4$	&	$32\times 8$ & $64\times 16$ \\	
\hline
new separation algorithm	&	41	&	41	&	32	&	30  \\
generic algorithm			&	730	&	215	&	67	&	30 \\
cycle length				&   9   &   20  &   46  &   126
\end{tabular}
\caption{Times for minimum separation with 1 hole.}
\label{table6}
\end{table}

\begin{table}[ht]
\begin{tabular}{l*{3}{r}|*{3}{r}}
\textbf{Rectangle 4 holes} & \multicolumn{3}{c|}{2K points, small holes} & \multicolumn{3}{c}{5K points, large holes}\\
size rectangle	&	$32\times 8$ &	$64\times 16$ & $128\times 32$ & $32\times 8$ &	$64\times 16$ & $128\times 32$\\	
\hline
new separation algorithm	&	20	&	25	&	5.6	&	233 & 259 & 240  \\
generic algorithm			&	62	&	25	&	6.4	&	875 & 428 & 248 \\
cycle length				&	24  &   61	&	201 &	29	& 77  & 342 
\end{tabular}
\caption{Times for minimum separation with 4 holes.}
\label{table7}
\end{table}



\begin{thebibliography}{10}

\bibitem{BK09}
S.~Bereg and D.~G. Kirkpatrick.
\newblock Approximating barrier resilience in wireless sensor networks.
\newblock {\em Proc. 5th ALGOSENSORS}, pp.~29--40. Springer, LNCS 5804, 2009,
  \url{http://dx.doi.org/10.1007/978-3-642-05434-1_5}.

\bibitem{CG16}
S.~Cabello and P.~Giannopoulos.
\newblock The complexity of separating points in the plane.
\newblock {\em Algorithmica} 74(2):643--663, 2016,
  \url{http://dx.doi.org/10.1007/s00453-014-9965-6}.

\bibitem{CJ15}
S.~Cabello and M.~Jej{\v c}i{\v c}.
\newblock Shortest paths in intersection graphs of unit disks.
\newblock {\em Comput. Geom.} 48(4):360--367, 2015,
  \url{http://dx.doi.org/10.1016/j.comgeo.2014.12.003}.

\bibitem{CK15}
S.~Cabello and M.~Kerber.
\newblock Semi-dynamic connectivity in the plane.
\newblock {\em Algorithms and Data Structures - 14th International Symposium,
  {WADS} 2015. Proceedings}, pp.~115--126. Springer, Lecture Notes in Computer
  Science 9214, 2015, \url{http://dx.doi.org/10.1007/978-3-319-21840-3_10}.

\bibitem{ChanS16}
T.~M. Chan and D.~Skrepetos.
\newblock All-pairs shortest paths in unit-disk graphs in slightly subquadratic
  time.
\newblock {\em 27th International Symposium on Algorithms and Computation,
  {ISAAC} 2016}, pp.~24:1--24:13, LIPIcs~64, 2016,
  \url{http://dx.doi.org/10.4230/LIPIcs.ISAAC.2016.24}.

\bibitem{bkos-08}
M.~{de Berg}, O.~Cheong, M.~{van Kreveld}, and M.~Overmars.
\newblock {\em Computational Geometry: Algorithms and Applications}.
\newblock Springer-Verlag, 3rd ed. edition, 2008,
  \url{http://dx.doi.org/10.1007/978-3-540-77974-2}.

\bibitem{eik-01}
A.~Efrat, A.~Itai, and M.~J. Katz.
\newblock Geometry helps in bottleneck matching and related problems.
\newblock {\em Algorithmica} 31(1):1--28, 2001,
  \url{http://dx.doi.org/10.1007/s00453-001-0016-8}.

\bibitem{GG11}
J.~Gao and L.~Guibas.
\newblock Geometric algorithms for sensor networks.
\newblock {\em Philosophical Transactions of the Royal Society of London A:
  Mathematical, Physical and Engineering Sciences} 370(1958):27--51, 2011,
  \url{http://dx.doi.org/10.1098/rsta.2011.0215}.

\bibitem{gkpvv-16}
M.~Gibson, G.~Kanade, R.~Penninger, K.~Varadarajan, and I.~Vigan.
\newblock On isolating points using unit disks.
\newblock {\em J. of Computational Geometry} 7(1):540--557, 2016,
  \url{http://dx.doi.org/10.20382/jocg.v7i1a22}.

\bibitem{gkv-ipud-11}
M.~Gibson, G.~Kanade, and K.~Varadarajan.
\newblock On isolating points using disks.
\newblock {\em Proc. 19th ESA}, pp.~61--69. Springer, LNCS 6942, 2011,
  \url{http://dx.doi.org/10.1007/978-3-642-23719-5_6}.

\bibitem{HS95}
M.~L. Huson and A.~Sen.
\newblock Broadcast scheduling algorithms for radio networks.
\newblock {\em {IEEE MILCOM '95}}, vol.~2, pp.~647-651 vol.2, 1995,
  \url{http://dx.doi.org/10.1109/MILCOM.1995.483546}.

\bibitem{cgal:k-vda2-15a}
M.~Karavelas.
\newblock {2D} voronoi diagram adaptor.
\newblock {\em {CGAL} User and Reference Manual}, {4.6} edition. {CGAL
  Editorial Board}, 2015,
  \url{http://doc.cgal.org/4.6/Manual/packages.html#PkgVoronoiDiagramAdaptor2Summary}.

\bibitem{KH07}
S.~Kloder and S.~Hutchinson.
\newblock Barrier coverage for variable bounded-range line-of-sight guards.
\newblock {\em Proc. ICRA}, pp.~391-396. IEEE, 2007,
  \url{http://dx.doi.org/10.1109/ROBOT.2007.363818}.

\bibitem{KWZ03}
F.~Kuhn, R.~Wattenhofer, and A.~Zollinger.
\newblock Ad-hoc networks beyond unit disk graphs.
\newblock {\em Proceedings of the 2003 Joint Workshop on Foundations of Mobile
  Computing}, pp.~69--78, DIALM-POMC '03, 2003,
  \url{http://doi.acm.org/10.1145/941079.941089}.

\bibitem{KLA07}
S.~Kumar, T.-H. Lai, and A.~Arora.
\newblock Barrier coverage with wireless sensors.
\newblock {\em Wireless Networks} 13(6):817--834, 2007,
  \url{http://doi.org/10.1007/s11276-006-9856-0}.

\bibitem{LP10}
Z.~Lotker and D.~Peleg.
\newblock Structure and algorithms in the {SINR} wireless model.
\newblock {\em SIGACT News} 41(2):74--84, 2010,
  \url{http://doi.acm.org/10.1145/1814370.1814391}.

\bibitem{cgal:n-rstd-15a}
G.~Neyer.
\newblock {dD} range and segment trees.
\newblock {\em {CGAL} User and Reference Manual}, {4.6} edition. {CGAL
  Editorial Board}, 2015,
  \url{http://doc.cgal.org/4.6/Manual/packages.html#PkgRangeSegmentTreesDSummary}.

\bibitem{cgal:tf-ssd-15a}
H.~Tangelder and A.~Fabri.
\newblock {dD} spatial searching.
\newblock {\em {CGAL} User and Reference Manual}, {4.6} edition. {CGAL
  Editorial Board}, 2015,
  \url{http://doc.cgal.org/4.6/Manual/packages.html#PkgSpatialSearchingDSummary}.

\bibitem{cgal}
{The CGAL Project}.
\newblock {\em {CGAL} User and Reference Manual}.
\newblock {CGAL Editorial Board}, 4.6 edition, 2015,
  \url{http://doc.cgal.org/4.6/Manual/packages.html}.

\bibitem{WGM06}
Y.~Wang, J.~Gao, and J.~S.~B. Mitchell.
\newblock Boundary recognition in sensor networks by topological methods.
\newblock {\em Proceedings of the 12th Annual International Conference on
  Mobile Computing and Networking}, pp.~122--133, MobiCom '06, 2006,
  \url{http://doi.acm.org/10.1145/1161089.1161104}.

\bibitem{zg-wsn-04}
F.~Zhao and L.~Guibas.
\newblock {\em Wireless Sensor Networks: An Information Processing Approach}.
\newblock Elsevier/Morgan-Kaufmann, 2004.

\end{thebibliography}
\end{document}